\documentclass{article}
\usepackage{amssymb, fullpage}
\usepackage{graphicx, amsmath, amsthm, url}
\newtheorem{lemma}{Lemma}
\newtheorem{fact}[lemma]{Fact}
\newtheorem{proposition}[lemma]{Proposition}
\begin{document}

\title{On the Complexity of the Minimum Latency Scheduling Problem on the Euclidean Plane\thanks{Part of this work was done while both authors were at ITCS, Tsinghua University, Beijing. This work was supported in
part by the National Natural Science Foundation of China Grant
60553001, and the National Basic Research Program of China Grant
2007CB807900, 2007CB807901.}}

\author{Henry Lin \\ University of Maryland 
\\ \url{henrylin@umiacs.umd.edu}
\and Frans Schalekamp \\ unaffiliated \\ \url{fms9@cornell.edu}
}

\maketitle

\begin{abstract}
We show NP-hardness of the minimum latency scheduling (MLS) problem under the physical model of wireless networking.  In this model a transmission is received successfully if the Signal to Interference-plus-Noise Ratio (SINR), is above a given threshold.  In the minimum latency scheduling problem, the goal is to assign a time slot and power level to each transmission, so that all the messages are received successfully, and the number of distinct times slots is minimized.

Despite its seeming simplicity and several previous hardness results for various settings of the minimum latency scheduling problem, it has remained an open question whether or not the minimum latency scheduling problem is NP-hard, when the nodes are placed in the Euclidean plane and arbitrary power levels can be chosen for the transmissions. We resolve this open question for all path loss exponent values $\alpha \geq 3$.
\end{abstract}

\section{Introduction}
In this paper, we give a reduction showing NP-hardness for the minimum latency scheduling (MLS) problem in wireless networking. In the MLS problem, we have nodes in a network, embedded in the Euclidean plane, seeking to send messages to each other in the physical model of wireless communication. For every message a time slot, and a power level has to be chosen. A message is received successfully if the transmission meets a Signal to Interference-plus-Noise Ratio (SINR) requirement. The objective of the problem is to minimize the number of time slots that are used to send all messages successfully. We show that the MLS problem, when arbitrary power levels can be chosen for the transmissions, and the nodes are placed in the Euclidean plane, is NP-hard, for all path loss exponent values $\alpha \geq 3$. This answers the open problem mentioned prominently in a survey paper by~\cite{LRW08}.   The hardness result also shows the MLS problem is NP-hard to approximate within a multiplicative factor better than $\frac{4}{3}$.

The main difficulty to overcome in this reduction, compared to previous hardness results for variants of this problem, is that we have no upper bound on the power levels. This means that in this reduction only the {\em layout} of the nodes (in the Euclidean plane) can be used to make sure a feasible three-round schedule corresponds to a feasible solution of the problem we reduce from.  The key insight to making this idea work, is creating an instance with a large (but polynomial) number of transmissions.

\subsection{Previous Work}
\label{sec:pw}
The physical model was first introduced in a seminal paper by Gupta and Kumar~\cite{GuptaKumar1999}.
There has been much previous work on the problem of scheduling and choosing power levels to send messages in the physical model of wireless networking: in~\cite{GOW07,AD09,KVW10,FKRV09,T09} this problem problems is considered in various settings and under differing assumptions.
In this work, we consider the case when the nodes are placed in the Euclidean plane. This models practical situations in which sensors are deployed in a 2-dimensional area, such as when sensors are deployed in a field to monitor weather conditions, or when sensor are placed along along roads  (without noticeable elevation changes going through valleys or peaks) to monitor traffic conditions.

Instead of trying to describe all different variants for which the MLS problem has been proved to be NP-hard, we will only highlight the result closest to ours, by
Katz, Volker, and Wagner~\cite{KVW10}. For a more complete summary of work in this area, we refer readers to the survey papers~\cite{LRW08,HL09,LP10} and book chapter~\cite{CF09}.

Katz, Volker, and Wagner~\cite{KVW10} show that the MLS problem of scheduling and choosing transmission powers is NP-hard if the network is embedded in the Euclidean plane and there are known upper and lower bounds on the power levels that can be used.  Their proof also shows that if there are only a finite set of transmission powers that can be chosen, then this problem is also NP-hard.  We show that the MLS problem is NP-hard when the nodes are in the Euclidean plane and arbitrary power levels can be chosen.

The distinction between the two problems can be thought of as follows. The work of Katz, Volker, and Wagner~\cite{KVW10}
is applicable in the setting in which we want to schedule messages on sensors that have already been built, as these sensors will naturally have an upper and lower bound on their allowable transmission powers due to hardware or software limitations.
The setting we consider is more general than that, and also includes the case in which
it is possible for the system designer to choose the sensors and the power levels at which they can transmit, so as to maximize the number of transmissions that can be sent in a given time period, and/or minimize the total number of time periods that are needed to send a certain set of messages.  Despite its seeming simplicity, the hardness of this problem has remained open for many years, and was even featured prominently as an open question in a survey by Locher, Rickenbach, and Wattenhofer~\cite{LRW08}.

The best algorithm for the MLS problem with arbitrary power levels is due to Kesselheim~\cite{K11}, who presents an $O( \log^2 n )$-approximation algorithm for the MLS problem in general metrics, and an $O( \log n )$-approximation for {\em fading metrics}. The notion of {\em fading metrics} was introduced by Halld\'orsson~\cite{H09}, and defined to be the combination of a metric and a path loss exponent, so that the doubling dimension of the metric is strictly less than the path loss exponent $\alpha$. It was introduced as a generalization of the combination of the metric on the Euclidean plane, and a path loss exponent $\alpha$ that is strictly larger than $2$. Note that the result presented here, is for this ``easier'' variant.

\subsection{Outline}
The outline of this paper is as follows. We define the MLS problem formally, and describe the physical model (SINR model) in Section~\ref{sec:pre}. After a high level overview of the reduction in Section~\ref{sec:highlevel}, we give a detailed description of the reduction and the proof the main result of this paper, namely that the MLS problem is NP-hard, when the nodes are placed in the Euclidean plane and arbitrary power levels can be chosen for the transmissions, in Section~\ref{sec:mainresult}. We conclude with open problems in Section~\ref{sec:openproblems}.

\section{Preliminaries}
\label{sec:pre}
\subsection{Minimum Latency Scheduling Problem}
In the minimum latency scheduling problem, we are given a set of nodes $X$ in a metric space and a distance function $d: X \times X \to R$, defining distances between any pair of points in the metric space.  In this paper, we will consider the special case where the set of nodes $X$ are on the Euclidean plane, and the distance function is the 2-dimensional Euclidean distance metric $d( x, y ) = || x -  y ||_2$.  The set of messages we wish to transmit will be represented by a multiset $M$ over $X \times X$.  In the minimum latency scheduling problem, we seek to partition the multiset $M$ into $T$ sets $M_1, M_2, \dots, M_T$ to be transmitted over $T$ rounds (or time slots), and choose power levels for the messages to be transmitted in each round, such that all messages are successfully transmitted according to the SINR model.  The goal will be to minimize $T$, the number of rounds needed to transmit all the messages successfully.

\subsection{Physical Model (SINR Model)}
In the SINR model, we have a parameter $\alpha$ called the path loss exponent, which determines how quickly the signal deteriorates as the distance increases. It is often assumed that $\alpha > 2$ as it is in real settings.  If a node $u$ is transmitting a message to node $v$ in some round $r$, we define the \emph{signal} received by node $v$ in round $r$ to be $S_r(v) = P_r(u)/d(u,v)^\alpha$, where $P_r(u)$ is the power used by node $u$ to transmit its message in round $r$.  When node $u$ is transmitting a message to a node $v$ in round $r$, we also define the \emph{interference} of node $v$ in round $r$ to be $I_r(v) = \sum_{w \in V \setminus \{ u \} } P_r(w)/d(v,w)^\alpha$, where $P_r(w) = 0$ if a node $w$ is not transmitting anything in a round $r$.  Lastly, each node $v$ has a noise parameter $n_v$ that determines how much background noise it experiences, and a message to node $v$ succeeds if the signal to interference plus noise ratio exceeds some threshold $\beta_v \geq 1$.  More formally, a message from node $u$ to $v$ succeeds in round $r$, if the following SINR inequality is satisfied:

\[ \frac{S_r(v)}{I_r(v) + n_v} = \frac{P_r(u)/d(u,v)^\alpha}{\sum_{w \in V \setminus \{ u \} } P_r(w)/d(v,w)^\alpha + n_v} > \beta_v. \]

\subsection{No Noise, Unit Threshold}
In the reduction presented, we will set $n_v = 0$ for all nodes $v$, thus proving that this problem is NP-hard for nodes placed in the Euclidean plane, {\it even if there is no background noise}. Note that this ``no noise'' assumption only potentially makes the {\it reduction} harder, and proving that problem is NP-hard without noise, also shows the problem with noise is NP-hard.  Additionally, we will set $\beta_v = 1$ for all $v$, which shows this problem is NP-hard even when all SINR threshold values are set to 1.  In the special case that $n_v = 0$ and $\beta_v = 1$ for all nodes $v$, the SINR inequality simply requires that $S_r(v) > I_r(v)$ for a message to be successfully transmitted.

\section{High Level Overview of the Reduction}
\label{sec:highlevel}
We show that the minimum latency scheduling problem is NP-hard, by reducing the 3-coloring problem of planar 4-regular graphs to it, which was shown to be NP-hard by Dailey~\cite{Dailey1980289}.  In the 3-coloring problem on planar 4-regular graphs, we are given a planar graph $G$ of degree 4, and we want to determine whether there is a proper coloring of the nodes of $G$ that uses only 3 colors.  We can assume that the graph is given as an orthogonal grid graph, which is graph drawn in the plane with all nodes placed at grid intersection points, and all edges drawn along grid lines without any edge crossings, but edges may bend at grid intersection points.   This coloring problem on an orthogonal grid graph is also NP-hard because Shiloach~\cite{Shiloach} and Valiant~\cite{Valiant} show that any planar $G$ of degree at most 4 can be drawn as an orthogonal grid graph in an area of size at most $O( |V|^2 )$.

We construct an MLS instance whose transmissions can be completely scheduled in 3 rounds if and only if $G$ is 3-colorable.
For each node $v$ in $G$, we will build a \emph{node gadget} consisting of two rows of nodes $N_v$ and a set of messages $M_v$ within $N_v$, which can be decomposed into 3 disjoint sets of messages $R_v$, $B_v$, and $G_v$ (where the names of the sets are chosen so as to suggested color classes (red, blue and green)), such that $M_v = R_v \cup B_v \cup G_v$.  The construction will be guaranteed to have the property that any 3 round solution for MLS instance must exactly transmit one of the sets $R_v$, $B_v$, or $G_v$ for each of the 3 rounds of the solution.  We call this the \emph{node color consistency property},  because if the messages for node $v$ are transmitted as the sets $R_v$, $B_v$, and $G_v$ in 3 rounds, then we can use these message transmissions to define a color for $v$ in each of the three rounds.

To guarantee that the three coloring will be a proper coloring where each pair of adjacent nodes has different colors, we construct three \emph{edge gadgets} for each edge $\{u,v\}$ in the graph consisting of the messages $R_{\{u,v\}}$, $B_{\{u,v\}}$, and $G_{\{u,v\}}$.  The edge gadget consisting of the messages $R_{\{u,v\}}$ will have the property that they cannot be transmitted in 3 rounds if the set of messages $R_u$ and $R_v$ are ever transmitted in the same round.  Similarly, the set of messages $B_{\{u,v\}}$ and $G_{\{u,v\}}$ will have the property that they cannot be transmitted in 3 rounds if the set of messages $B_u$ and $B_v$, or $G_u$ and $G_v$, are ever transmitted in the same round, respectively.  We call satisfying these requirements the \emph{edge color constraint property}.  Together with the node color consistency property mentioned in the last paragraph, this edge color constraint property, will ensure that if all messages are successfully transmitted in 3 rounds, then each round of messages transmitted will define a coloring for the nodes in the original graph $G$, and that coloring must have the property that adjacent nodes are colored different colors.

\section{Reduction}
\label{sec:mainresult}
We will now describe the reduction in detail, by first describing the node and edge gadgets, and then explaining the way the gadgets are connected.

\subsection{Node Gadget}

\begin{figure}
\begin{center}
\includegraphics[width=0.8\textwidth]{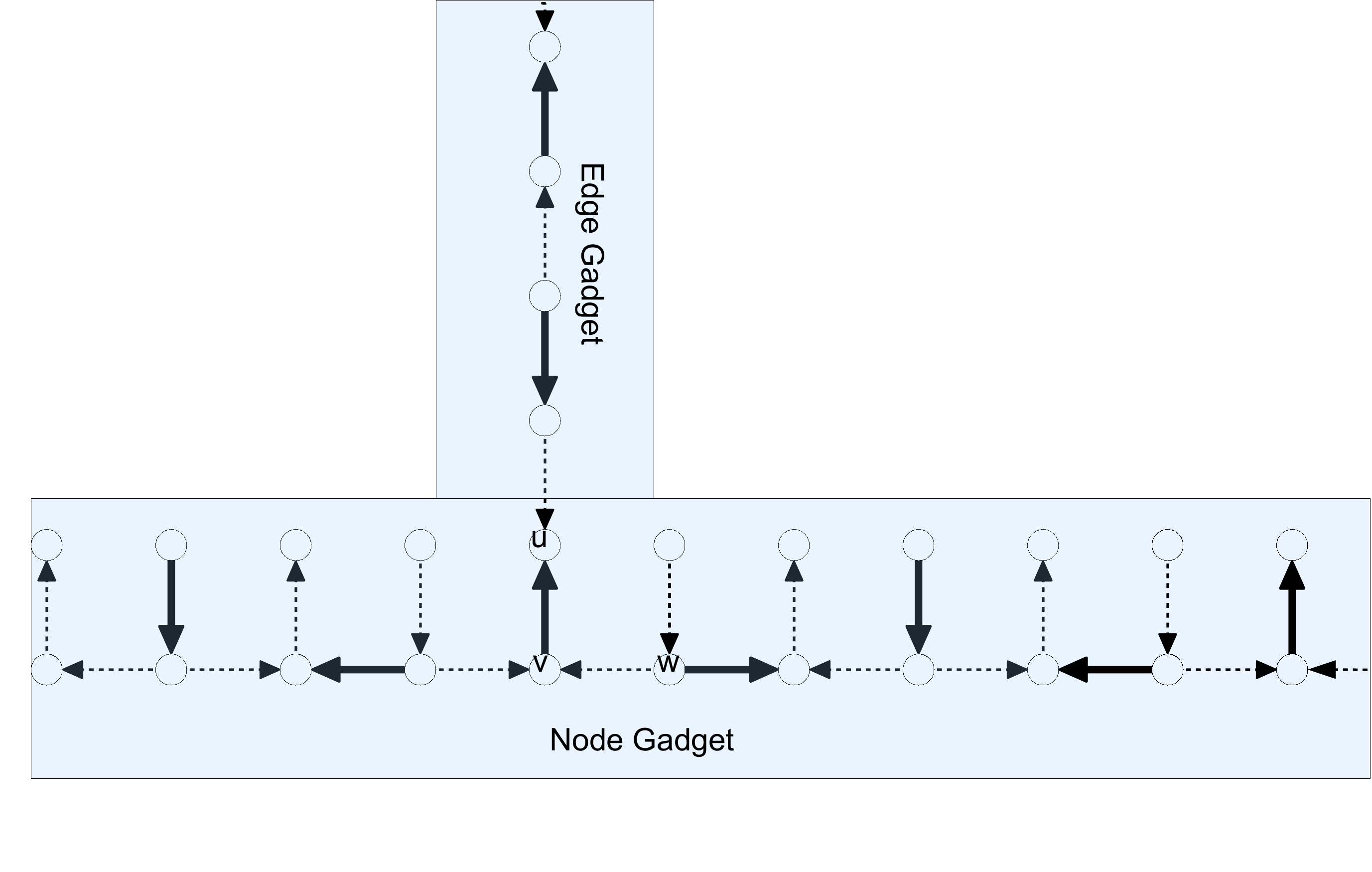}
\end{center}
\caption[Node gadget]{An illustration of the construction of a node gadget and how it is connected to an edge gadget.  The messages in bold illustrate the messages in $R_u$.}
\label{fig:nodegadgeton}
\end{figure}
\begin{figure}
\begin{center}
\includegraphics[width=.7\textwidth]{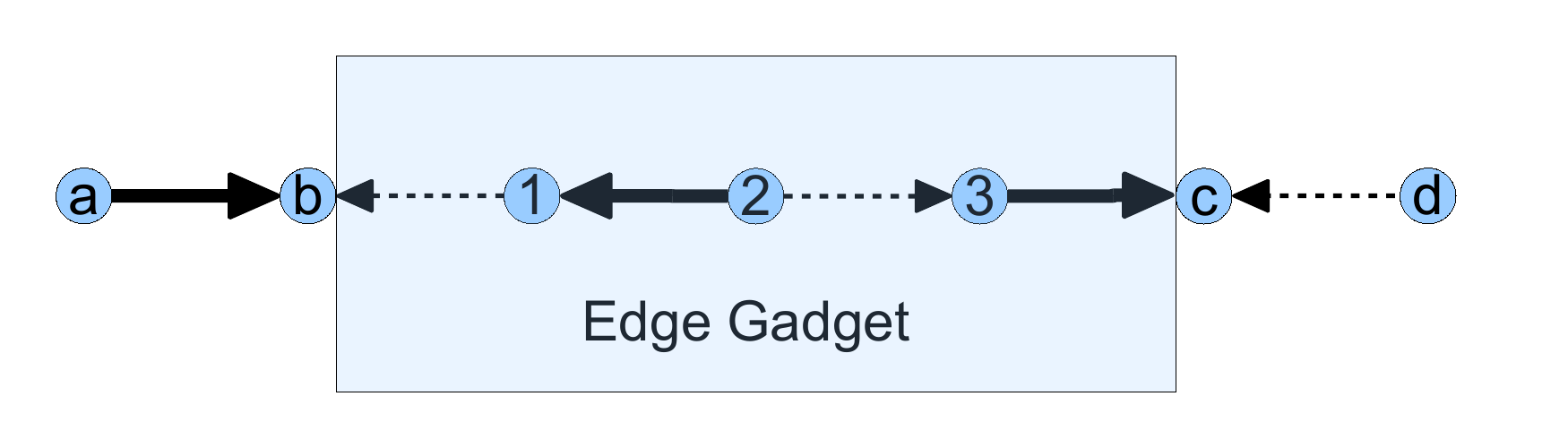}
\end{center}
\caption[Edge gadget]{An illustration of an edge gadget enforcing that the messages $(a,b) \in M_u$ and $(d,c) \in M_v$ cannot transmitted in the same round, if all transmissions are to finish in 3 rounds. Each bold arrow represents a message to be transmitted once, while each dashed arrow represents a message to be transmitted twice.}
\label{fig:edgegadget}
\end{figure}
\begin{figure}[t!]
\begin{center}
\includegraphics[width=.35\textwidth]{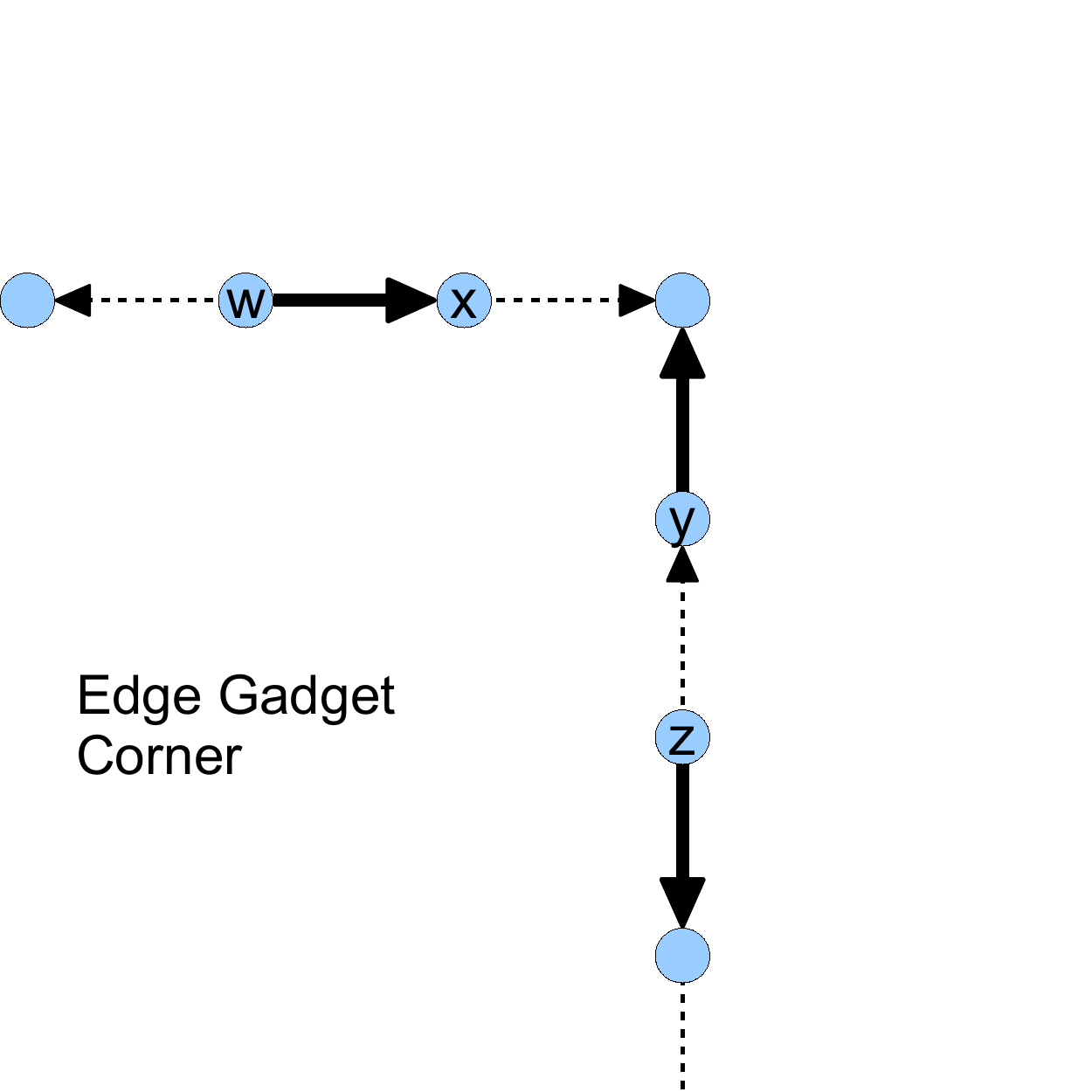}
\end{center}
\caption[Edge gadget]{An illustration showing how a turn in an edge gadget can be achieved.}
\label{fig:edgeturn}
\end{figure}

The node gadget consists of two rows of nodes and transmissions between them, according to the pattern sketched in Figure~\ref{fig:nodegadgeton}.  Transmissions are in indicated by arrows in the figure.  Each transmission is unit length, and the nodes in each node gadget can be assumed to lie on integer grid points. The total length of the node gadget is $\Theta( n^2 )$, where $n$ is the number of nodes in the coloring instance we are reducing from.  The full node gadget consists of the pattern shown in Figure~\ref{fig:nodegadgeton} repeated many times to the left and to the right of the picture.  As the pattern is repeated, each node in the bottom row of the node gadget has three transmissions in which it participates as either a sender or receiver (except for the nodes on the left most and right most end of the gadget, which only have two transmissions).  Along the length of the node gadget up to 12 edge gadgets (3 edge gadgets for each possible edge) may be connected to the node gadget, although we will make sure the edge gadgets are adequately spaced by distance $\Theta( n^2 )$ to ensure that the interference can still be bounded easily.

The transmissions of the node gadget corresponding to node $u$ in the coloring instance, are grouped into three sets, $R_u$, $B_u$ and $G_u$. In Figure~\ref{fig:nodegadgeton}, the transmissions in $R_u$ are highlighted by the darker solid edges.  Ignoring the directionality of the edges, the pattern of the transmissions for $G_u$ and $B_u$ are exactly the same, except that the pattern is shifted to the right by one node for the transmissions in $G_u$, and shifted right by two nodes for $B_u$. In the Appendix, we prove the ``node color consistency property''.

\begin{lemma}[Node color consistency]
If there exists a three round solution to the constructed MLS instance then the messages sent in any round are exactly the messages of one of the sets $R_u$, $G_u$ or $B_u$.
\end{lemma}

\subsection{Edge Gadget}
Each edge $\{u,v\}$ in the coloring instance will give rise to three edge gadgets in the MLS instance, consisting of the messages $R_{\{u,v\}}$, $B_{\{u,v\}}$, and $G_{\{u,v\}}$, to ensure that the nodes $u$ and $v$ cannot transmit messages of the same color, if a 3 round MLS solution is to be achieved.  The $R_{\{u,v\}}$ messages will share a node in common with one of the node gadget messages in $R_u$ and one of the messages in $R_v$ to ensure that the messages in $R_u$ and $R_v$ cannot be transmitted at the same time in any 3 round solution for the constructed MLS problem.  Similarly, $B_{\{u,v\}}$ will share a node in common with one message in $B_u$ and one message in $B_v$, and $G_{\{u,v\}}$ will share a node in common with one message in $G_u$ and one message in $G_v$, in order to maintain the edge color constraint property.

In the construction, the node and edge gadgets will be spaced fairly far apart at a distance of $\Theta( n^2 )$ to make bounding the interference of any transmission easier.  However, each message constructed will either have unit length or a slightly longer length of 1.1, so that the signal between nodes remains adequately high.  To illustrate the construction of the edge gadget, imagine that we have the simpler goal of ensuring that the messages $(a,b)\in R_u$ and $(d,c)\in R_v$ shown in Figure~\ref{fig:edgegadget} are not transmitted in the same round in any 3 round MLS solution.  We can maintain this property by simply adding the pattern of messages shown in Figure~\ref{fig:edgegadget}, where each bold arrow in the figure represents a message to be sent once and each dashed arrow represents a message to be sent twice.  For this \emph{mini edge gadget}, we can show the following lemma holds:

\begin{lemma}[Edge color constraint lemma]
Given the mini edge gadget construction shown in Figure~\ref{fig:edgegadget}, there exists no three round solution to the constructed MLS instance in which the $(a,b)$ and $(d,c)$ are sent in the same round.
\end{lemma}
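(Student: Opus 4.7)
The plan is a proof by contradiction. Assume a $3$-round schedule exists in which $(a,b)$ and $(d,c)$ are both transmitted in the same round, which I will call round~$1$. I will derive a contradiction by tracking which rounds each remaining message of the mini edge gadget of Figure~\ref{fig:edgegadget} must occupy, eventually forcing either a half-duplex (send/receive) conflict at some node or an SINR violation at some receiver.

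First I would enumerate the messages of the mini edge gadget and classify them: each bold arrow contributes one message occupying exactly one round, while each dashed arrow contributes a message that must appear in two distinct rounds of the schedule, hence occupies exactly two of the three rounds. This fixes a ``message-slot budget'' per round, and because round~$1$ is already spent on $(a,b)$ and $(d,c)$, the budget left in round~$1$ is tight. Every dashed message incident to $a$, $b$, $c$, or $d$ is then forced to avoid round~$1$ entirely, and hence to occupy rounds~$2$ and~$3$.

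Next I would propagate these forced placements through the chain of auxiliary nodes of the gadget. A node cannot simultaneously send and receive, and with $\beta_v=1$ two unit-length messages cannot share a receiver. Consequently, at each auxiliary node $x$ the set of rounds in which its incident messages can be placed shrinks quickly under the above forcing. I expect that a short propagation along the chain connecting $b$ to $c$ makes the remaining dashed messages demand incompatible round assignments: either two of them end up needing to be simultaneously sent from (or received at) the same auxiliary node, or a bold message has no legal round left to be scheduled in. Either outcome contradicts the assumed $3$-round schedule.

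The main obstacle is not the combinatorial counting above but the SINR verification: even when two messages do not share a node, scheduling them in the same round must satisfy $S_r(v)>I_r(v)$. I would split the interference at each receiver into an extra-gadget part and an intra-gadget part. For the extra-gadget part, every other gadget lies at distance $\Theta(n^2)$, and since $\alpha\geq 3$ the sum $\sum P/d^\alpha$ over outside transmitters is at most $O(n^{-3})$ times the signal, so it can be absorbed into a constant slack by choosing the transmission powers appropriately. For the intra-gadget part, the forced placements produced above leave only a small constant number of geometric configurations, each involving unit-length and length-$1.1$ edges between nearby grid points. These I would discharge by direct arithmetic: in each surviving configuration I would bound the would-be interfering signal $P'/d'^\alpha$ from below and the desired signal $P/d^\alpha$ from above and show the SINR inequality fails, using $\alpha\geq 3$ crucially to make the short interfering distance dominate. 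Combined with the earlier scheduling contradiction, this closes off every possible $3$-round schedule containing $(a,b)$ and $(d,c)$ in the same round.
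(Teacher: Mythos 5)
Your proposal never actually reaches a contradiction, and the one observation that makes the paper's proof work in a few lines is missing. The paper's argument is a pure parity count: every edge-gadget node participates in exactly $3$ transmissions (counting each dashed arrow twice), so in any $3$-round schedule each such node must be sending or receiving in \emph{every} round. If $(a,b)$ and $(d,c)$ occupy the same round, the remaining edge-gadget nodes --- an \emph{odd} number of them --- must all be busy in that round by pairing up among themselves via edge-gadget messages, which is impossible. Your plan gestures at this ("each node must be busy", "forced placements propagate") but then says only that you "expect" the propagation to produce incompatible assignments; that is a conjecture, not a derivation, and without the parity observation the case analysis you would have to do is left entirely open.

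The second, more serious problem is your proposed SINR discharging step. You plan to rule out surviving configurations by bounding "the would-be interfering signal $P'/d'^{\alpha}$ from below and the desired signal $P/d^{\alpha}$ from above." But in this problem the powers are \emph{arbitrary and unbounded} --- that is precisely the difficulty the paper highlights --- so you cannot lower-bound an interferer's contribution or upper-bound a signal in absolute terms: the adversary can rescale powers at will. The only way the paper ever derives an SINR-based impossibility is via a \emph{cyclic} chain of inequalities in which the unknown powers cancel (its Lemma~3, which concludes $P_1 > P_1$ from a symmetric pair of distance conditions). A one-sided magnitude argument of the kind you describe cannot succeed here. Fortunately, for this particular lemma no SINR arithmetic is needed at all beyond the half-duplex constraint (a node cannot send and receive in the same round); all of the interference estimates belong to the \emph{other} direction of the reduction, where one shows a valid $3$-coloring yields a feasible schedule. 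You should drop the SINR portion of the plan and replace the propagation sketch with the explicit parity count.
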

\begin{proof}
Each node in the edge gadget has 3 transmissions to send or receive, so each of those nodes must be either sending or receiving a message in each round, if they are to complete all their transmissions in 3 rounds, since the SINR model disallows any node from both sending and receiving a transmission at the same time.  If $(a,b)$ and $(d,c)$ are both being transmitted in one round that leaves 3 edge gadget nodes, who each must pair with each other to send or receive a transmission.  Since there are an odd number of edge gadget nodes looking to send or receive a message, there must be at least one edge gadget node which cannot send or receive a message in the round where both $(a,b)$ and $(d,c)$ are sent, implying that not all edge gadget messages can be sent in 3 rounds.
\end{proof}

To connect node gadgets which may be spaced arbitrarily far apart, note that this pattern of 4 edges (consisting of 6 transmissions) can be repeated as many times as needed, by repeating copies of the pattern next to each other, while matching node $b$ of one copy of the pattern to node $c$ of the next copy of the pattern.  Note that the edge color constraint property still holds for repeated copies of the mini edge gadget, since if the boundary node messages $(a,b)$ and $(d,c)$ are transmitted at the same time, then there still remains an odd number of nodes, which all must transmit or receive a message in the round, which is not possible.  Additionally, since the edge gadgets will need to make turns in order to connect to node gadgets in the proper orientation, we will construct turns in the edge gadget as shown in Figure~\ref{fig:edgeturn}.   A turn consists of the pattern of edges shown in Figure~\ref{fig:edgeturn} (constructed by connecting two mini edge gadgets at the corner node), and can also be repeated as many times as needed (possibly rotated 90, 180, or 270 degrees) within the edge gadget, while still maintaining the edge color constraint property: if the top left node and the bottom right node are receiving a message that does not belong to the turn pattern in the same round, then there are again an odd number of nodes (5) which all must transmit or receive a message in that round, which is not possible.

Lastly, note that since the edge gadget always consists of a repeated pattern of 4 edges, if we only use edges of unit length, then the edge gadgets will always have a length which is a multiple of 4, while we may need to build edge gadgets of an integer length which is not a multiple of 4 to connect node gadgets.  For this reason, we make use of two different transmissions lengths, 1 and 1.1, to give enough flexibility to connect any two node gadgets with an edge gadget of any integer length greater than 40.  By replacing 0, 10, 20, or 30 transmissions of unit length transmissions with length $1.1$ transmissions, then we create edge gadgets of any integer length greater than 40, which may be necessary for the node gadget connections described in the next section.

\subsection{Node Gadget and Edge Gadget Connections}

Now that we have described the edge gadgets and the node gadgets, we will now describe how the edge gadgets are connected between different node gadgets.  Note that each node in the original coloring instance can have up to 4 edges, so the node gadget must be long enough to have connections to up to 12 edge gadgets, 3 for each edge of the coloring instance.  In Figure~\ref{fig:supernodegadget}, we show how an node gadget is connected to 4 other node gadgets to the left, right, above, and below it.  Note that this can be done without crossing any of the edge gadgets, even though all the edge gadgets need to be connected to the ``top'' part of the node gadget.  Additionally, to ensure no edge gadgets cross outside of the figure, we will use the placement of nodes and edges in the original planar grid coloring instance to help guide the placement of nodes and edges, such that no edge gadgets cross.

\begin{figure}
\begin{center}
\includegraphics[width=.7\textwidth]{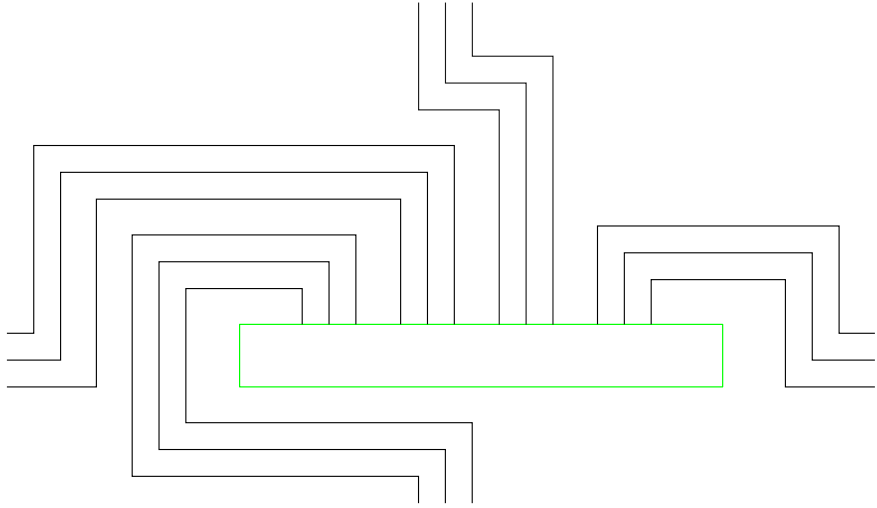}
\end{center}
\caption[Sketch of a node gadget and $12$ edge gadgets]{Sketch of the interaction between a node gadget and its edge gadgets: The green box indicates a node gadget. Every line represents an edge gadget, which represents the combination of an edge in the coloring instance and a color.}
\label{fig:supernodegadget}
\end{figure}

Moreover, we can ensure that no edge gadgets cross, even while spacing out the node gadgets and making the node gadgets have length $\Theta(n^2)$, so that the edge gadgets are spread apart by distance by distance at least $3 n^2$, where $n$ is the number of nodes in the original coloring instance.  This will ensure that when we consider a ball of radius $n^2$ around any node, it will intersect with at most one node gadget and one edge gadget, and this will make bounding the interference of any transmission easier later.

We can place all the nodes and edges to match the requirements by taking the placement of nodes in the original coloring instance, and placing the node gadgets in a larger grid graph, which has dimensions which are a factor of $\Theta(n^2)$ larger and the node gadgets are spaced out by a multiplicative factor of $\Theta(n^2)$ more than they were before.  After spacing the node gadgets out, the edge gadgets can follow roughly the same path as laid out by the original coloring instance, all while maintaining the distance of $3 n^2$ from any other node gadget.  Additionally, we can make sure all node gadget nodes lie on integer grid points, and all edge gadget connections have integer length greater than 40, so that they can be properly constructed by repeating the patterns shown in Figure~\ref{fig:edgegadget} and Figure~\ref{fig:edgeturn}.  We can also make sure that the turns occurring in the edge gadgets are also spaced out by a distance of $3 n^2$ as well, so we do not have any edge turns closer than $n^2$ from each other.

Even though we have increased the number of nodes and messages in the constructed MLS instance, we can still bound the total number of nodes and messages in the MLS instance by $O(n^4)$, which will be useful later for bound the interference to each node.  This is because we can assume that the coloring instance with $n$ nodes is drawn on a grid with area at most $O(n^2)$, and each node in the original coloring instance is replaced by a node gadget of size $\Theta(n^2)$.  Furthermore, the total length of all edges in the original grid is at most $O(n^2)$, and each unit length of edge is transformed into $\Theta(n^2)$ edges by the reduction, so the number of nodes in the node and edge gadgets are both bounded by $O(n^4)$.

\begin{proposition}
The constructed minimum latency scheduling problem has a 3 round solution if and only if the original planar grid graph is 3-colorable.
\end{proposition}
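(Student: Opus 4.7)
The plan is to prove the biconditional in two directions, treating the ``only if'' direction as essentially a corollary of the two lemmas already stated, and doing the actual work in the ``if'' direction. For the only-if direction, suppose the constructed instance admits a 3-round schedule. By the Node color consistency lemma applied to the gadget of each vertex $u$ of $G$, the transmissions inside that gadget in any one round are exactly one of $R_u$, $B_u$, $G_u$, and since all three sets must be completed within the three rounds, there is a bijection between the three rounds and the three sets. I would define $c(u)$ to be the color whose set is transmitted in round $1$. For any edge $\{u,v\}$ of $G$, applying the Edge color constraint lemma to the three mini-edge-gadget chains $R_{\{u,v\}}$, $B_{\{u,v\}}$, $G_{\{u,v\}}$ rules out scheduling $R_u$ with $R_v$, $B_u$ with $B_v$, or $G_u$ with $G_v$ in the same round. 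Combined with the node color consistency property, this forces $c(u) \neq c(v)$, so $c$ is a proper 3-coloring of $G$.

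For the if direction, assume a proper 3-coloring $c$ of $G$. I would define the schedule so that in round $X \in \{R,B,G\}$ the transmissions consist of $X_u$ for every $u$ with $c(u)=X$ together with the transmissions of $X_{\{u,v\}}$ for every edge; the repeated dashed messages in the edge gadgets (transmitted twice in Figure~\ref{fig:edgegadget}) get distributed over the remaining two rounds in the unique way dictated by the gadget pattern. By construction, every transmission of the instance is assigned to exactly one round (with the dashed transmissions used twice). It remains to choose power levels so that every receiver satisfies the SINR inequality $S_r(v) > I_r(v)$ with $\beta_v = 1$ and $n_v = 0$.

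The natural choice is a uniform power $P$ for every transmitter in every round. For a receiver $v$ at the endpoint of a unit (or $1.1$) transmission, the signal is $P/\ell^\alpha$ with $\ell \in \{1, 1.1\}$. I would split the interference $I_r(v)$ into a near part (transmitters in the same node or edge gadget as $v$) and a far part (transmitters in other gadgets). The far part is easy: by construction, every other gadget lies at distance at least $3n^2$, there are $O(n^4)$ transmitters altogether, and a standard annular sum using the spacing argument from the ``Node Gadget and Edge Gadget Connections'' subsection bounds this part by
\[
O\!\left( n^4 \cdot \frac{P}{(n^2)^\alpha} \right) \;=\; O\!\left( \frac{P}{n^{2\alpha-4}} \right),
\]
which is $o(P)$ whenever $\alpha \geq 3$ and $n$ is large enough. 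For $\alpha > 2$ the 2D annular tail also converges, so no single annulus dominates.

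The main obstacle, and the place where the geometric constants really matter, is the near part: within an active node gadget or edge gadget there are many simultaneous unit-length transmissions, and for each receiver $v$ I need to verify that the sum $\sum_{w} P/d(v,w)^\alpha$ over the constantly many nearby co-transmitters is strictly less than $P/\ell^\alpha$. This reduces to a finite, case-based geometric check over the fixed local patterns of Figures~\ref{fig:nodegadgeton}, \ref{fig:edgegadget}, and \ref{fig:edgeturn}, scheduled as $R$, $B$, or $G$, using $\alpha \geq 3$ and the fact that the nearest co-transmitter to any receiver lies at distance bounded away from $\ell$ by a constant (since simultaneously scheduled messages in one gadget never share an endpoint). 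Once this local check is done for each gadget type and each of the three rounds, adding the far-field bound and choosing $P$ to be any positive constant yields a feasible 3-round schedule, completing the proposition.
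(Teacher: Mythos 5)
Your ``only if'' direction matches the paper's. The ``if'' direction, however, has two genuine gaps. First, the schedule you define does not transmit all the messages of the instance: every node gadget carries all three sets $R_u$, $B_u$, $G_u$, and a $3$-round solution must send each of them in some round (indeed, by node color consistency, exactly one per round). Your rule ``in round $X$ send $X_u$ for every $u$ with $c(u)=X$'' schedules only the one set matching $u$'s color and never schedules the other two, so what you build is not a solution at all; it also breaks the synchronization with the edge gadgets, whose endpoint messages $(a,b)\in R_u$ and $(d,c)\in R_v$ are forced into the rounds in which $R_u$ and $R_v$ are actually sent (necessarily different rounds, since $c(u)\neq c(v)$), not both into a single round labelled $R$. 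The paper instead rotates the sets: round 1 sends $R_u$, $B_u$ or $G_u$ according to whether $c(u)$ is $1$, $2$ or $3$; round 2 sends the cyclic shift; round 3 sends the remainder; and each mini edge gadget's bold messages are slotted into the round in which its incident node-gadget message is sent, with the dashed messages filling the other two rounds.

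Second, uniform power does not satisfy the SINR constraints. At the boundary where an edge gadget feeds into a node gadget, the receiving node $u$ of Figure~\ref{fig:nodegadget} has a simultaneous transmitter at distance exactly $1$, so its interference already equals a unit-power signal over a unit-length link (the paper's bound for this receiver is $1.73$). Your assertion that the nearest co-transmitter lies at distance bounded away from the link length therefore fails at precisely these boundary receivers, and the local check you propose would not go through with a single power $P$. The paper handles this by transmitting the boundary messages at power $2$ and verifying $2 > 1.73$; some such non-uniform power assignment is unavoidable. Your far-field bound and the ``finite local check'' framework are otherwise in line with the appendix calculations, but the two issues above are exactly where the construction's content lies.
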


\begin{proof}
Note that the previous proof of the node color consistency property and the edge color constraint property have already proved that a 3 round solution for the MLS problem implies a 3 coloring for the coloring instance, since the color of the messages that are transmitted in the first round of the MLS solution must be a valid coloring for the original coloring instance.  To complete the reduction, we just need to show that a 3 coloring for the original grid graph implies a 3 round solution for the constructed MLS problem.  For this part of the proof, we will assume that $\alpha$ has some value $\geq 3$, which will be necessary for ensuring that all transmissions can be feasibly scheduled in 3 rounds, when a feasible 3 coloring exists.

For the proof, we will assume that we are given a coloring of original coloring instance consisting of the colors 1, 2 and 3, and we will use it define the transmissions to be used in each of the 3 rounds of the constructed MLS instance.  Given a coloring for the original graph, we will use the following transmission scheme: in the first round we send the $R_u$ transmissions in the node gadgets corresponding to nodes $u$ that are colored with color 1, $B_v$ for the nodes gadgets which correspond to nodes $v$ colored with color 2, and $G_w$ for the node gadgets corresponding to nodes $w$ that are colored with color 3.  In the second round, we rotate the color transmissions, so that we send the $B_u$ transmissions for the nodes $u$ that are colored with color 1, $G_v$ for the nodes $v$ that are colored with color 2, and $R_w$ for the nodes $w$ that are colored with color 3.  Finally in the third round, we transmit all remaining messages.

Now that we have defined all the transmissions for all the node gadgets, we also have to define the transmissions for the edge gadgets.  To schedule the messages that occur in the edge gadgets, consider the basic building block of an edge gadget shown in Figure~\ref{fig:edgegadget}.  The messages shown in bold in Figure~\ref{fig:edgegadget} are scheduled precisely when the message $(a,b)$ is transmitted in Figure~\ref{fig:edgegadget}, and the other messages shown as dashed lines are transmitted in the other 2 rounds.  This same transmission pattern holds for the full edge gadget, which repeats many copies of the building block shown in Figure~\ref{fig:edgegadget}.

The power level for each transmission is 1, with the exception of messages that are at the ``boundary'' of an edge gadget transmitting to an node gadget node, as shown in Figure~\ref{fig:nodegadget} in the Appendix, where node $u$ receives a message from the edge gadget node above it.  These ``boundary'' messages are transmitted at power 2.

The calculations that show all the transmissions we defined are successful is given in the Appendix, by bounding the interference to a few key nodes, and then showing that the bounds provided can also be used to bound the interference to all other receiving nodes, to be below the signal they receive.  The main idea will be to first show that all nodes whose distance is greater than $n^2$ units away contribute at most $o(1/n)$ interference, which is negligible and can be ignored for the purposes of the analysis.  After ignoring transmissions further than $n^2$ units away, we more precisely upper bound the interference of transmissions which occur within $n^2$ distance of the node we are analyzing.  This final interference calculation completes the proof of the reduction by showing a 3 coloring of the original graph implies a 3 round solution for the constructed MLS instance.
\end{proof}

\section{Open Problems}
\label{sec:openproblems}
An interesting open question is whether there exists an $O( 1 )$-approximation algorithm for this problem.  Alternatively, it would be interesting to improve this hardness result to show that the MLS problem is hard to approximate better than $O( \log n)$ to match the best known approximation algorithm.  As we mentioned previously, the best known approximation algorithm is due to Kesselheim~\cite{K11}, who presents an $O( \log^2 n )$-approximation algorithm for the MLS problem in general metrics, and an $O( \log n )$-approximation for fading metrics.  In fact, we do not even know whether it is harder to approximate this problem without power restrictions, than when power levels are limited.

Since we show that our proof holds for $\alpha \geq 3$ in the appendix, it is an interesting open question to show NP-hardness for $\alpha$ arbitrarily close to 2, and without any restrictions on the power levels used. For $\alpha$ arbitrarily close to 2, we are not sure whether the reduction presented here can be used.  We have an NP-hardness result for the case where the parameter $\alpha$ can have any arbitrary value, but then we need the additional assumption that there is a maximum power level for the transmissions, and in the reduction we make use of the noise parameter.  This result is also stronger than what is known previously, since~\cite{KVW10} for example, requires both an upper {\it and} lower bound on the power levels.  Finally, we would like to note that the result here holds for centralized algorithms, and therefore also for distributed algorithms.  Another important problem is to find a good distributed algorithm for this problem.

\bibliographystyle{plain}
\bibliography{wireless}

\appendix

\section{Proof of Node Color Consistency}\label{sec:colorc}

We will call the transmissions which go from a node in the upper row of the node gadget to a node in the lower row of the node gadget \emph{down} transmissions, while edges which go from a node in the lower row to a node in the upper row will be called \emph{up} transmissions.  We will use the term \emph{vertical} transmissions to collectively refer to both the up and down transmissions within a node gadget.  Lastly, the transmissions between two nodes in the lower row of the node gadget will be called \emph{horizontal} transmissions.

\begin{lemma}[Node color consistency]
If there exists a three round solution to the constructed MLS instance then the messages sent in any round are exactly the messages of one of the sets $R_u$, $G_u$ or $B_u$.
\end{lemma}
\begin{proof}
Since each internal node in the lower row of the node gadget is participating in exactly 3 transmissions, it must be the case that is participating in exactly 1 transmission each round in the three round solution.

Consider two neighboring vertical transmissions (i.e., the bottom nodes participating in these transmissions are 1 unit apart), say $t_1$ and $t_2$. Note that the distance between the sender of $t_1$ and the receiver of $t_2$ is $1$, the same as the length of the transmissions. Therefore, these transmissions cannot take place in the same round, by Lemma~\ref{crossinglemma}, which we prove now.

\begin{lemma}\label{crossinglemma}
Suppose there are two transmissions, $i_1$ to $i_2$ and $j_1$ to $j_2$, such that the $d( i_1, j_2 ) \leq d( i_1, i_2 )$ and $d( j_1, i_2 ) \leq d( j_1, j_2 )$.  Then these two transmissions cannot take place in one round in the SINR model.
\end{lemma}
\begin{proof}
Assume by means of contradiction that we have a feasible transmission schedule so that these transmissions take place in the same round.
Let $P_1$ be the power level of transmission $i_1$ to $i_2$, and $P_2$ be the power level of transmission $j_1$ to $j_2$. We get the following string of inequalities:
\[
	P_1 / d( i_1, i_2 )^\alpha  >  P_2 / d( j_1, i_2 )^\alpha \geq P_2 / d( j_1, j_2 )^\alpha  > P_1 / d( i_1, j_2 )^\alpha  \geq P_1 / d( i_1, i_2 )^\alpha,
\]
The inequalities follow because the signal for the messages has to be larger than the interference, and due to the assumption on the distances. We thus obtain $P_1 > P_1$, which is obviously not possible.
\end{proof}

Now consider two vertical transmissions where the bottom nodes participating in these transmissions are 2 units apart, say $t_1$ and $t_2$. We know by the previous reasoning that the transmission between these two transmissions, say $t_3$, cannot occur in the same round. But the bottom node of $t_3$ has to participate in a transmission every round, and the other two transmissions it is part of cannot occur either in this round, because it involves nodes that participate in $t_1$ and $t_2$. Therefore, in a feasible three round solution, $t_1$ and $t_2$ cannot be sent in the same round.

Consider a three round solution where, in one round, there are two vertical transmissions where the bottom nodes participating in these transmissions are more than 3 units apart, and there is no vertical transmission between these transmissions in that round. This implies that a sequence of 3 adjacent vertical messages that can be sent in the other two rounds. But this is not possible without violating the fact that adjacent vertical messages can never be sent in the same round, and vertical messages with 2 units of horizontal distance between them cannot be sent either, as we argued previously.

Therefore, the vertical messages must be sent in a pattern where the vertical messages sent are exactly 3 units apart in each round, and color consistency must hold for each node gadget in any feasible 3 round solution.
\end{proof}

\section{Proof that all messages can be successfully transmitted}

\begin{figure}
\begin{center}
\includegraphics[width=0.8\textwidth]{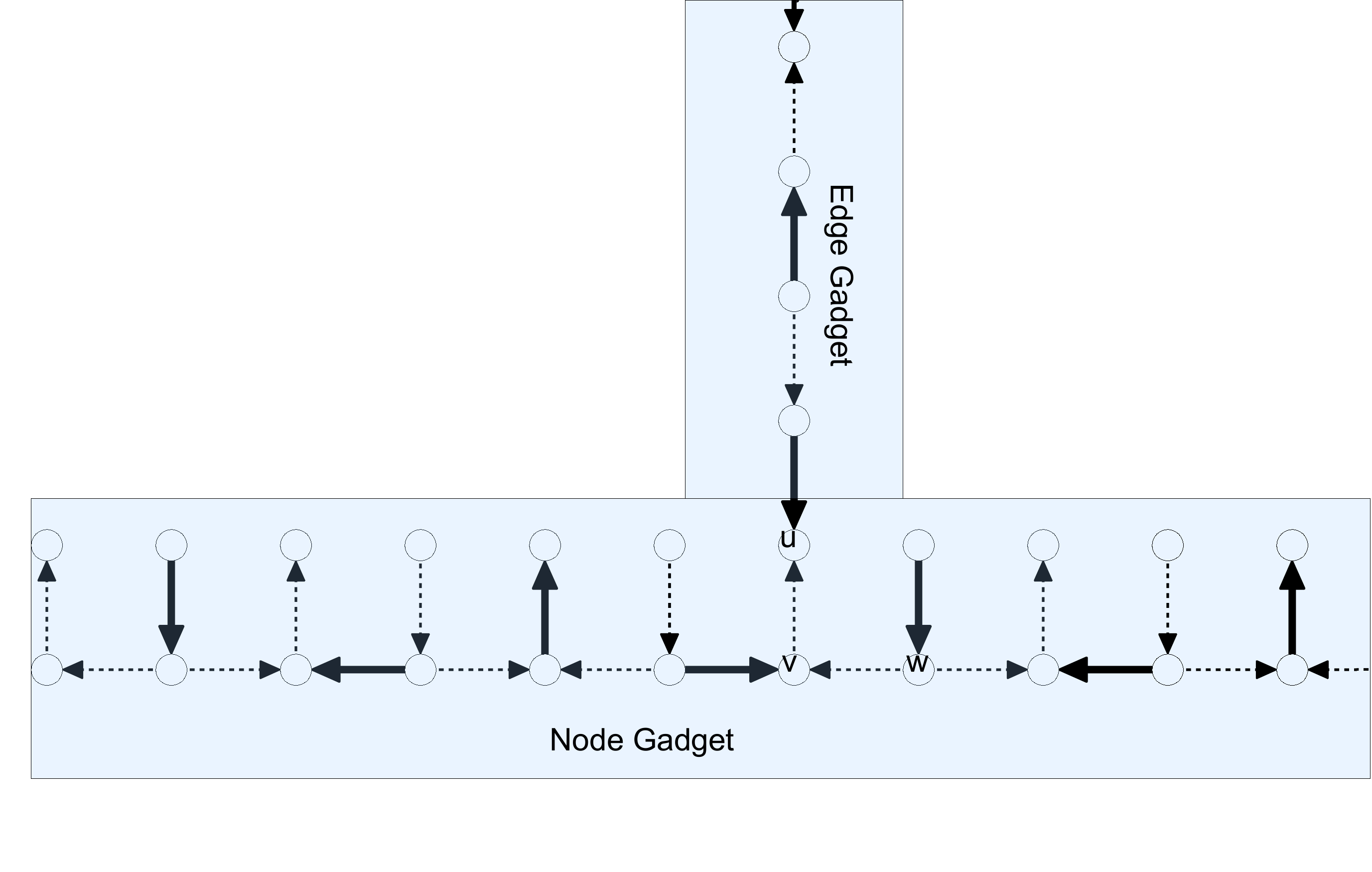}
\end{center}
\caption[Node gadget]{An illustration of a node gadget and its connection to an edge gadget.  The messages in bold illustrate another transmission pattern in which we need to bound the interferences, and show the message transmissions are successful.}
\label{fig:nodegadget}
\end{figure}

In this section we bound the interference to all receivers in the edge and node gadgets to complete the proof of correctness for the reduction.

\begin{fact}
The interference for any transmission of transmissions whose senders are a distance greater than $n^2$ units away is $o( 1/n )$.
\end{fact}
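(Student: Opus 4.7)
The plan is to prove this fact by a crude counting argument, exploiting the fact that the power levels are uniformly bounded and that the total number of transmissions in the MLS instance is polynomially bounded in $n$.

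First I would record two inputs from the preceding construction. (i) Every transmission uses power at most $2$: the default power level is $1$, and the only exception is the ``boundary'' transmissions from an edge gadget into a node gadget, which use power $2$. In particular, $P_r(w) \leq 2$ for every sender $w$ and every round $r$. (ii) The earlier counting argument bounds the total number of nodes in the constructed MLS instance by $O(n^4)$, so in any round the number of concurrent senders is also $O(n^4)$.

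Next, fix an arbitrary receiver $v$ and an arbitrary round $r$, and consider the subset $F$ of senders in round $r$ whose distance from $v$ exceeds $n^2$. Each such sender contributes to the interference at $v$ an amount
\[
\frac{P_r(w)}{d(v,w)^\alpha} \;\leq\; \frac{2}{(n^2)^\alpha} \;=\; 2\,n^{-2\alpha}.
\]
Summing over all $w \in F$, using $|F| = O(n^4)$, gives
\[
\sum_{w \in F} \frac{P_r(w)}{d(v,w)^\alpha} \;=\; O\bigl(n^{4-2\alpha}\bigr).
\]
Since the statement of the reduction assumes $\alpha \geq 3$, this is $O(n^{-2})$, which is certainly $o(1/n)$.

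There is no real obstacle here: the bound $P_r(w)\leq 2$ together with the polynomial bound on the number of nodes leaves enough slack that the brute-force sum already beats $1/n$ by a polynomial factor. One could refine the argument by bucketing the far senders into annuli of doubling radii and using the fact that only $O(r^2)$ grid nodes fit in a ball of radius $r$ (recovering a bound that degrades gracefully as $\alpha \downarrow 2$), but for the stated fact, where only $\alpha \geq 3$ and the looser estimate $o(1/n)$ are needed, the crude bound above is sufficient.
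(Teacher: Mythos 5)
Your proof is correct and follows essentially the same route as the paper's: bound each far sender's contribution by $2\,n^{-2\alpha}$ using the uniform power bound of $2$, multiply by the $O(n^4)$ bound on the number of nodes, and conclude $O(n^{4-2\alpha}) = o(1/n)$ for $\alpha \geq 3$. The remark about a refined annulus argument is a nice aside but not needed here.
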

\begin{proof}
There are at most $O(n^4)$ total nodes in the construction, and each node further away than $n^2$, contributes at most $2 n^{-2\alpha} = O(n^{-6})$ interference when $\alpha\geq 3$, since every power level is at most $2$.  Thus the total interference from these far away nodes is at most $O(n^4) \times O(n^{-6}) = o(1/n)$.
\end{proof}

For the purposes of the analysis, it may also be useful to note that the messages transmitting at power 2 are only below edge gadgets, which are separated by a distance of at least $3 n^2$.  This means that when we are bounding the interference of nodes within distance $n^2$, we only need to explicitly bound the interference of one node transmitting at power 2, while assuming that all other nodes transmit at power 1.

We start by bounding the interference from transmissions within a distance of $n^2$, to the nodes $v$, $u$, and $w$ in the configuration shown in Figure~\ref{fig:nodegadget}, which are in some sense ``worst case'' examples, as we will make clear shortly.

\begin{fact}\label{fact:v}
The total interference from transmissions within a distance of $n^2$, to node $v$ in Figure~\ref{fig:nodegadget} is bounded by $0.94$.
\end{fact}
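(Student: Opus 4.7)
The plan is to enumerate the interferers at $v$ within distance $n^2$ and bound the total contribution $\sum_{t} P_t/d(t,v)^\alpha$ explicitly. Since the construction spaces node gadgets and edge gadgets apart by at least $3n^2$, the disc of radius $n^2$ around $v$ intersects only $v$'s own node gadget together with, at most, the single edge gadget attached to it. Moreover, because power-$2$ transmissions occur only at the edge-gadget/node-gadget boundary, at most one interferer inside this disc transmits at power $2$; all the rest transmit at power $1$. So the interference splits as (a) contributions from active transmitters within the node gadget itself, and (b) contributions from active transmitters of the attached edge gadget, including the one possible power-$2$ boundary node.

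For part (a), I would use that the node gadget consists of two horizontal rows of unit-spaced grid points, and read off from Figure~\ref{fig:nodegadget} which senders are active in the round containing $v$'s reception. Each active sender sits at some integer horizontal offset $k$ from $v$ on one of the two rows, so its distance to $v$ is $k$ or $\sqrt{k^2+1}$. I would tabulate the nearest handful (those at distances $1$, $\sqrt{2}$, $2$, $\sqrt{5}$, $3$, $\sqrt{10}$, and so on, up to some small cutoff $K$) and sum their contributions $1/d^\alpha$ exactly. The tail can be bounded using $\alpha \ge 3$: each remaining sender contributes at most $1/k^3$ for its horizontal offset $k \ge K$, and the two rows contribute an $O(1/K^2)$ geometric-style tail.

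For part (b), the nearest edge gadget transmitter lies at some distance $d_0 \ge 1$ from $v$; only one boundary node (if any inside the disc) transmits at power $2$, contributing at most $2/d_0^\alpha$. The remaining edge-gadget interferers sit further along the gadget as one walks away from the junction, so their distances to $v$ grow at least linearly. Using $\alpha \ge 3$ again, the sum of their contributions is bounded by a convergent series comparable to $\sum_{j \ge 1} 1/(d_0+j)^3$, which is $O(1/d_0^2)$ and hence small.

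The main obstacle is the arithmetic, not the structural reasoning: the target bound $0.94$ is tight in the sense that the signal at $v$ is $1$ (unit-length transmission at power $1$), so there is only about $0.06$ of slack. The close-range contributions at distances $1$ and $\sqrt{2}$ are each $\Theta(1)$, so one must read Figure~\ref{fig:nodegadget} precisely to verify that only few such very-near neighbors actually transmit in the round under analysis, and then sum those dominant near-field terms exactly rather than bounding them loosely. Once the nearest contributions are accounted for, the tail bound from $\alpha \ge 3$ is routine, and the totals for parts (a) and (b) combine to at most $0.94$.
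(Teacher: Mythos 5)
Your plan is essentially the paper's proof: the paper likewise restricts attention to the one node gadget and one edge gadget inside the radius-$n^2$ disc, sums the near-field interferers at their exact distances ($\sqrt{2}$, $2$, $3$, $3$, $3$, $4$, plus a $2\cdot 2^{-\alpha}$ term from the edge gadget above $v$), and bounds each tail by $\sum_{i=5}^{\infty} i^{-\alpha} \le 5^{-\alpha} + \frac{1}{\alpha-1}\cdot\frac{1}{5^{\alpha-1}}$, evaluating everything at $\alpha=3$ to get roughly $0.939 \le 0.94$. You correctly identify that the whole content of the fact is the figure-reading and arithmetic you defer (in particular that no interferer sits at distance $1$ and at most one sender in range uses power $2$); carrying out that tabulation is all that separates your outline from the paper's argument.
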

\begin{proof}
Within the node gadget, the interference caused by nodes in the node gadget to the left of the $v$ can be upper bounded by $2^{-\alpha} + 3^{-\alpha} + \sum_{i=5}^{\infty} i^{-\alpha}$, while the interference caused by nodes in the node gadget to the right of $v$ can be upper bounded by $\sqrt{2}^{-\alpha} + 3^{-\alpha} + 4^{-\alpha} + \sum_{i=5}^{\infty} i^{-\alpha}$.  Lastly, the interference by nodes in the edge gadget above node $v$ can be upper bounded by $2 \cdot 2^{-\alpha} + 3^{-\alpha} + \sum_{i=5}^{\infty} i^{-\alpha}$.  To bound the term $\sum_{i=5}^{\infty} i^{-\alpha}$, note that $\sum_{i=5}^{\infty} i^{-\alpha} = 5^{-\alpha} + \sum_{i=6}^{\infty} i^{-\alpha} \leq 5^{-\alpha} + \int_{i=5}^{\infty} i^{-\alpha}$ since $i^{-\alpha}$ is nonincreasing. We thus get $\sum_{i=5}^{\infty} i^{-\alpha} \leq \frac{1}{\alpha-1} \frac{1}{5^{\alpha-1}} + 5^{-\alpha}$.  Thus, we can bound the total interference to node $v$ by $2^{-\alpha} + 3^{-\alpha} + \sqrt{2}^{-\alpha} + 3^{-\alpha} + 4^{-\alpha} + 2 \cdot 2^{-\alpha} + 3^{-\alpha} + 3\cdot(\frac{1}{\alpha-1} \frac{1}{5^{\alpha-1}} + 5^{-\alpha}) \leq 0.94$, when $\alpha = 3$.
\end{proof}

\begin{fact}\label{fact:boundary}
The total interference from transmissions within a distance of $n^2$, to node $u$ in Figure~\ref{fig:nodegadget} is bounded by $1.73$.
\end{fact}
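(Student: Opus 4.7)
The plan is to reuse the enumeration-and-tail-bound template of Fact~\ref{fact:v}, adapted to the additional sources of interference visible to a boundary node. Since $u$ receives its signal from the edge-gadget node directly above at unit distance and power $2$, the signal value is $S = 2/1^\alpha = 2$, so we must bound the near-field interference strictly below $2$ with enough slack to also absorb the $o(1/n)$ far-field contribution from the preceding fact. By the remark preceding this fact, within distance $n^2$ of $u$ there is at most one power-$2$ transmitter, and it is precisely the sender to $u$; hence every other interferer transmits at power $1$ and contributes at most $d^{-\alpha}$ to the total interference.

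First I would enumerate the near-field interferers by partitioning them geometrically into four groups: (a) nodes in the upper row of the node gadget lying to the left of $u$, at integer horizontal distances; (b) the symmetric group to the right of $u$; (c) nodes in the lower row of the node gadget, at slant distances $\sqrt{1 + k^2}$ for integer $k$; and (d) nodes of the edge gadget extending upward above $u$, at integer distances $2, 3, 4, \ldots$ (the node at distance $1$ is the sender and so excluded). Using Figure~\ref{fig:nodegadget}, I would write out the few smallest-distance contributions for each group explicitly (up through distance $4$), matching the style of Fact~\ref{fact:v}, and then bound each remaining tail by $\sum_{i \geq 5} i^{-\alpha} \leq 5^{-\alpha} + \frac{1}{(\alpha-1)\,5^{\alpha-1}}$, with an analogous integral bound for the diagonal sum involving $(1 + k^2)^{-\alpha/2}$.

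Finally I would substitute $\alpha = 3$ --- the worst case, since $d^{-\alpha}$ is decreasing in $\alpha$ for $d \geq 1$ --- and verify numerically that the total does not exceed $1.73$. The jump from $0.94$ (for $v$) to $1.73$ (for $u$) should come almost entirely from the new column of interferers in the edge gadget above $u$ and from the fact that $u$, sitting at the top row, sees the bottom row of the node gadget at small slant distances on both sides. The main obstacle is purely bookkeeping: correctly identifying which nearby lattice positions carry active transmissions in the worst round (using the color-consistency pattern established earlier), avoiding double-counting with the sender to $u$, and handling the asymmetry introduced by the edge gadget attaching only on one side. Once the enumeration is fixed, the tail estimates are a direct reuse of those already carried out in Fact~\ref{fact:v}.
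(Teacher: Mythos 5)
Your overall template---enumerate the near-field senders, bound the tails by $\sum_{i\ge 5} i^{-\alpha} \le 5^{-\alpha} + \frac{1}{(\alpha-1)5^{\alpha-1}}$, evaluate at $\alpha=3$---is exactly what the paper does, and your observations that the signal at $u$ is $2$ and that the sender to $u$ is the only power-$2$ transmitter within distance $n^2$ are correct. The gap is in the bookkeeping you defer to the end, and it is not minor. The paper's count is
\[
\sqrt{2}^{-\alpha} + \sqrt{5}^{-\alpha} + \sqrt{10}^{-\alpha} \;+\; 1 + \sqrt{10}^{-\alpha} + 4^{-\alpha} \;+\; 2^{-\alpha} + 3\sum_{i=5}^{\infty} i^{-\alpha} \;\le\; 1.73 ,
\]
where the first three terms are the three active senders to the left of $u$, the next three the three active senders to the right, and the last two cover the edge gadget and everything farther than $4$. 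The dominant term is the $1$: one of the senders to the right of $u$ sits at distance exactly $1$ in $u$'s own row and contributes a full unit of interference by itself. Your proposal attributes the jump from $0.94$ to $1.73$ ``almost entirely'' to the edge-gadget column above $u$ and to the slant distances $\sqrt{1+k^2}$; at $\alpha=3$ those contribute roughly $0.2$ and $0.5$ per side respectively and cannot account for the bound. This misattribution matters in both directions. If your enumeration omits the unit-distance neighbor, you will ``prove'' a bound well below $1$ that is false (and you will have no explanation for why boundary messages need power $2$ at all). If instead you conservatively place an active sender at every nearby lattice point, as your four groups (a)--(d) suggest, the two same-row sums $\sum_{i\ge 1} i^{-3}$ alone already exceed $2$ and you overshoot $1.73$. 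The content of this fact is precisely the count of which neighbors actually transmit in the round in which $u$ receives---three per side within distance about $4$, one of them at distance $1$---and that count cannot be left as an afterthought.
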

\begin{proof}
By applying the same analysis, the total interference to node $u$ is upper bounded by $\sqrt{2}^{-\alpha} + (\sqrt{5})^{-\alpha} + (\sqrt{10})^{-\alpha} + 1 + (\sqrt{10})^{-\alpha} + 4^{-\alpha} + 2^{-\alpha} + 3 \cdot \sum_{i=5}^{\infty} i^{-\alpha} \leq 1.73$, when $\alpha = 3$.  The first 3 terms bound the interferences of the 3 nodes to the left of $u$, and the next three terms bound the interferences of the 3 nodes to the right of $u$.  The remaining two terms bound the interferences of all the remaining nodes in the edge gadget within distance $n^2$, and the nodes of distance greater than 4 in the node gadget of $u$.
\end{proof}

\begin{fact}\label{fact:w}
The total interference from transmissions within a distance of $n^2$, to node $w$ in Figure~\ref{fig:nodegadget} is bounded by $.65$.
\end{fact}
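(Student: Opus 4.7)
The plan is to mimic the structure of the proofs of Facts~\ref{fact:v} and~\ref{fact:boundary}: identify the receiver $w$ in Figure~\ref{fig:nodegadget}, enumerate the concurrent senders that lie within distance $n^2$ of $w$, group them by ``nearby'' (a short explicit list) versus ``far'' (absorbed into a tail bound), and then evaluate the sum at $\alpha=3$ to get $0.65$.

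First, I would read off from Figure~\ref{fig:nodegadget} which nodes are simultaneously transmitting in the round under consideration and locate them relative to $w$. As in the earlier two facts, the contributing nodes split into three geometric groups: senders in the same node gadget to the left of $w$, senders to the right of $w$, and senders in the edge gadget attached above (or below) $w$. For each group I would list the few closest distances (squared-distance integers such as $1,2,4,5,8,9,10,\dots$ yielding lengths like $1,\sqrt2,2,\sqrt5,\dots$), stopping once the index reaches $5$ so I can replace the remainder of each row with the shared tail $\sum_{i=5}^\infty i^{-\alpha}$. The fact that the bound for $w$ is smaller than for $v$ suggests that $w$ sits one position further from the edge gadget (so the ``above'' contribution starts at distance $\sqrt{5}$ or larger rather than $2$) and/or that one of the two horizontal neighbors is itself a sender in the same round and hence drops out as an interferer; I would confirm this from the figure and record the exact contributing distances accordingly.

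Next I would invoke the same tail estimate used before, namely
\[
\sum_{i=5}^{\infty} i^{-\alpha} \;\leq\; 5^{-\alpha} + \frac{1}{\alpha-1}\cdot\frac{1}{5^{\alpha-1}},
\]
and, after accounting for the one sender at power $2$ permitted by the remark preceding Fact~\ref{fact:v}, add together the explicit nearby terms plus at most three copies of the tail. Substituting $\alpha=3$ (the worst case, since every individual summand is decreasing in $\alpha$) I would verify arithmetically that the total is at most $0.65$.

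The main obstacle is purely bookkeeping: correctly reading off from Figure~\ref{fig:nodegadget} which nodes are concurrently transmitting at $w$ and which are silent, so that no interferer is omitted and no sender is double-counted (in particular, handling the one possible power-$2$ boundary transmission as noted before Fact~\ref{fact:v}). Once the list of contributing distances is fixed, the calculation is routine and parallels Facts~\ref{fact:v} and~\ref{fact:boundary} verbatim, with the monotonicity of $i^{-\alpha}$ in $\alpha$ ensuring that the bound established at $\alpha=3$ continues to hold for all $\alpha\geq 3$.
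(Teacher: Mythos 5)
Your proposal matches the paper's proof, which is exactly the same bookkeeping exercise: the paper bounds the interference at $w$ by summing explicit terms for the nearby senders to the left ($2^{-\alpha}+3^{-\alpha}+4^{-\alpha}$), to the right (the same), and in the edge gadget above ($2\cdot\sqrt{5}^{-\alpha}+3^{-\alpha}$, confirming your guess that the edge-gadget contribution starts at distance $\sqrt{5}$ and that the factor $2$ accounts for the power-$2$ boundary transmission), plus three copies of the tail $\sum_{i=5}^{\infty} i^{-\alpha}$, all evaluated at $\alpha=3$. The only content you leave open is reading the exact distances off the figure, which is unavoidable without seeing it; the method, grouping, tail estimate, and monotonicity-in-$\alpha$ remark are all consistent with what the paper does.
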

\begin{proof}
By similar reasoning as in the previous proofs, we bound the interference to node $w$ by $2^{-\alpha} + 3^{-\alpha} + 4^{-\alpha} + 2^{-\alpha} + 3^{-\alpha} + 4^{-\alpha} + 2 \cdot \sqrt{5}^{-\alpha} + 3^{-\alpha} + 3 \cdot \sum_{i=5}^{\infty} i^{-\alpha} \leq .65$.
\end{proof}

\begin{fact}\label{fact:u}
The total interference from transmissions within a distance of $n^2$, to node $u$ in Figure~\ref{fig:nodegadgeton} is bounded by $.9994$.
\end{fact}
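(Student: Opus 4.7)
The plan is to mirror the approach used in Facts~\ref{fact:v}, \ref{fact:boundary}, and~\ref{fact:w}: enumerate every transmitter within distance $n^2$ of node $u$ in Figure~\ref{fig:nodegadgeton}, compute (or upper bound) its distance to $u$, sum the $d^{-\alpha}$ contributions, and verify the total stays below the signal $1/1^\alpha = 1$ that $u$ receives from its own unit-length transmission. The combination of the preceding Fact on $o(1/n)$ interference from far nodes together with this bound then certifies the SINR inequality at $u$.

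First I would pin down precisely which transmission $u$ is receiving in the $R_u$ round shown in Figure~\ref{fig:nodegadgeton} and identify the concurrent senders. The interferers split naturally into three groups: (i) senders within the same node gadget, located along the two rows at horizontal offsets $\pm 1, \pm 2, \pm 3,\ldots$ with either vertical offset $0$ or $1$ from $u$; (ii) senders inside any edge gadget that is glued to the node gadget in the vicinity of $u$, whose distances can be read off from Figure~\ref{fig:nodegadgeton} exactly as in the proof of Fact~\ref{fact:boundary}; and (iii) all remaining transmitters within distance $n^2$, which sit at horizontal distance at least $5$ along the gadget row and contribute at most the tail $\sum_{i=5}^\infty i^{-\alpha}$ per row. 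Then I would use the same integral bound as before, namely
\[
\sum_{i=5}^\infty i^{-\alpha} \le 5^{-\alpha} + \frac{1}{\alpha-1}\cdot\frac{1}{5^{\alpha-1}},
\]
to handle group (iii). I must remember to double the contribution of any single sender that belongs to the "boundary" transmissions (power $2$) discussed just before Fact~\ref{fact:v}; by the remark that boundary senders only lie under edge gadgets spaced $\ge 3n^2$ apart, at most one such sender is within distance $n^2$ of $u$.

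Plugging in the worst case $\alpha = 3$ (larger $\alpha$ only shrinks every term), I would then add the explicit close-range terms from groups (i) and (ii) to three copies of the tail bound and check numerically that the result is at most $.9994$. The main obstacle is that, unlike the earlier facts where the target constants ($0.94$, $1.73$, $0.65$) left generous slack, here the bound is razor-thin: it must come in strictly below $1$, and it is precisely this calculation that forces the restriction $\alpha \ge 3$. So the plan requires being careful not to overestimate any of the dominant small-distance terms (especially the $\sqrt{2}^{-\alpha}$ and $2^{-\alpha}$ contributions from immediate neighbors and from the power-$2$ boundary sender), since losing even a small additive amount in those terms would push the total past $1$ and invalidate the reduction. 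Once those dominant terms are computed exactly at $\alpha=3$ and the rest are bounded via the tail estimate, the claimed inequality follows by direct arithmetic.
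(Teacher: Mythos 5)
Your overall strategy (exact near terms plus an integral tail, times three groups: left row, right row, edge gadget above) matches the paper's, but the specific cutoff you commit to is where the argument breaks. You propose to handle everything at horizontal distance $\geq 5$ with three copies of the tail bound $\sum_{i=5}^\infty i^{-\alpha} \le 5^{-\alpha} + \frac{1}{\alpha-1}\cdot\frac{1}{5^{\alpha-1}}$, which at $\alpha=3$ equals $0.028$. Adding the unavoidable close terms ($\sqrt{2}^{-3}+3^{-3}\approx 0.391$ on each side of $u$ in the node gadget, and $2^{-3}+3^{-3}\approx 0.162$ from the edge gadget above) to three copies of $0.028$ already gives roughly $2(0.391+0.028)+(0.162+0.028)\approx 1.027 > 1$, so your plan certifies nothing. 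The culprit is not the dominant $\sqrt{2}^{-\alpha}$ and $2^{-\alpha}$ terms you flag as the danger (those are computed exactly in any version of the argument); it is the mid-range band from distance $5$ to $10$, where the ``one sender at every integer distance'' overcount is far too pessimistic: the actual concurrent senders in the $R_u$ round are sparse and sit at distances like $\sqrt{26},\sqrt{37},\sqrt{50},9$ (node gadget) and $6,7,10$ (edge gadget), contributing only about $0.016$ and $0.013$ respectively rather than $\sum_{i=5}^{10} i^{-3}\approx 0.020$ each.

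The paper's proof therefore computes the interference \emph{exactly} for all senders within distance $10$ of $u$, using the true distances read off the transmission pattern, and only starts the integral tail at $11$, where $\frac{1}{\alpha-1}\cdot\frac{1}{11^{\alpha-1}}+11^{-\alpha}\approx 0.0049$ per group. That is what brings the total down to $0.9994$. To repair your proposal you would need to push the exact enumeration out to radius $10$ (and, as a side point, note that no doubled power-$2$ term appears among the interferers in this configuration --- adding one as you suggest would again push the sum past $1$).
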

\begin{proof}
Here we have to be very precise about the analysis, and compute the interference exactly for all nodes within distance 10, and then upper bound the interferences from nodes whose interference are greater than distance 10 with integrals.  In the configuration shown, within distance 10 of $u$, there are nodes transmitting to the left of $u$ at distances $\sqrt{2}, 3, \sqrt{26}, \sqrt{37}, \sqrt{50}, 9$, and the interference to the left within the node gadget of $u$ can be bounded by $\sum_{i=11}^{\infty} i^{-\alpha} \leq \int_{i=11}^{\infty} i^{-\alpha} + 11^{-\alpha} \leq \frac{1}{\alpha-1} \frac{1}{11^{\alpha-1}} + 11^{-\alpha}$.  These two exact terms also bound the total interference from nodes to the right of $u$ within $u$'s node gadget.  Additionally, we can bound the interference of all the messages being transmitted in the edge gadget above $u$ and within $n^2$, by exactly computing the interference of the nodes occurring at distances $2, 3, 6, 7, 10$, and again using $\frac{1}{\alpha-1} \frac{1}{11^{\alpha-1}} + 11^{-\alpha}$ to bound all interferences in the edge gadget of distance greater than 10 and within distance $n^2$.  Numerically computing and summing the terms above, we get that the total interference to node $u$ is upper bounded by $0.9994$, which is just slightly less than the signal of 1 that it receives from its sender.
\end{proof}

\begin{fact}\label{fact:allnode}
The total interference from transmissions within a distance of $n^2$, to any message in a node gadget is strictly less than their signal level $1$.
\end{fact}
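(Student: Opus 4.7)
The plan is to show that every receiver of a power-$1$ message within a node gadget falls into one of a small number of local configurations, each of which is already covered by the bounds $0.94$, $0.65$, and $0.9994$ of Facts~\ref{fact:v}, \ref{fact:w}, and~\ref{fact:u}. Combined with the $o(1/n)$ contribution from transmitters outside the radius-$n^2$ ball (the first fact of this section), this will keep the total interference strictly below the signal level of~$1$.

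First I would use the node color consistency lemma and the periodic structure of both gadget types to reduce the problem to a finite case analysis. In any valid three-round schedule, the transmissions active in the node gadget containing the receiver form one of the three shifted patterns $R_u$, $G_u$, $B_u$, and any edge gadget within distance $n^2$ contains one of three shifted copies of its repeating building block. Because the construction places at most one node gadget and at most one edge gadget inside any radius-$n^2$ ball (thanks to the $3n^2$ spacing imposed in the previous subsection), the local neighborhood of the receiver is determined up to translation by: (i) the receiver type (top-row endpoint of a vertical transmission, bottom-row endpoint of a vertical transmission, or bottom-row endpoint of a horizontal transmission); (ii) which of $R_u$, $G_u$, $B_u$ is active in its node gadget that round; and (iii) whether a connected edge gadget is nearby, together with its offset along the node gadget and which shift of its building block is active.

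Next I would argue that the three receivers already handled in Facts~\ref{fact:v}--\ref{fact:u} are the worst representatives of these finitely many configurations. For each remaining configuration I would write out the multiset of distances from the receiver to all transmitters active within distance $n^2$; each such multiset either coincides with one of those in Facts~\ref{fact:v}--\ref{fact:u} or can be matched term-by-term against one of them with each distance no smaller. Since $d^{-\alpha}$ is decreasing in $d$ and every power is at most~$2$, this yields an interference bound no larger than the corresponding previously-computed bound. Concretely, horizontal receivers in the lower row reduce to the analysis of~$v$; vertical receivers whose node gadget is not within $n^2$ of any connected edge gadget reduce to (a strict improvement of) the analysis of~$w$; and vertical receivers whose node gadget lies directly under a nearby edge gadget reduce to the configuration of Fact~\ref{fact:u}.

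The main obstacle will be handling the Fact~\ref{fact:u} configuration itself, because its bound $0.9994$ is quite tight: a single extra nearby interferer contributing even $10^{-4}$ would break it. I therefore need to verify carefully that no unexplored receiver type sees a strictly richer cluster of close-by power-$1$ transmitters than the one drawn in Figure~\ref{fig:nodegadgeton}, and to check all three shifts of the node-gadget pattern against all three shifts of the edge-gadget pattern so as to identify the worst simultaneous alignment. Once this enumeration is complete, adding the far-field $o(1/n)$ contribution to the previously-computed $0.9994$ yields a quantity still strictly less than the signal level $1$ for all sufficiently large $n$, which finishes the proof.
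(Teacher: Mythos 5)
Your proposal takes essentially the same route as the paper: the paper's proof of this fact is a short case analysis asserting that every node-gadget receiver (up, down, or horizontal) is dominated by one of the three configurations already bounded in Facts~\ref{fact:v}, \ref{fact:w}, and~\ref{fact:u}, which is exactly your plan, just with the term-by-term distance-domination argument and the enumeration of pattern shifts spelled out more explicitly. The only minor divergence is bookkeeping: the fact as stated concerns only transmitters within distance $n^2$, so the far-field $o(1/n)$ term you fold in here is deferred by the paper to the subsequent facts, and the paper splits the vertical receivers into up versus down cases rather than by proximity to an edge gadget, but neither difference changes the substance of the argument.
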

\begin{proof}
For an up message, the claim follows from Fact~\ref{fact:u}, since this is the worst case with an edge gadget directly above it and a maximal number of messages that can occur to the left and right of it. Similarly, for a down message, the claim follows from Fact~\ref{fact:w}, since this is the worst case with an edge gadget directly above it and a maximal number of messages that can occur to the left and right of it. Finally, for a vertical message, the claim follows from Fact~\ref{fact:v}.
\end{proof}

\begin{fact}
All nodes within a node gadget successfully receive their messages.
\end{fact}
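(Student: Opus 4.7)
The plan is to combine the two interference bounds already established into a single SINR check for every receiver in a node gadget. The signal at every node gadget receiver is at least $1$: a unit-length transmission at power $1$ contributes signal $1/1^\alpha = 1$, and the only places we use power $2$ are at boundary messages which also have length $1$ and so deliver signal $\geq 1$ (in fact $2$). Thus it suffices to show that the total interference at every node gadget receiver is strictly less than $1$.

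Splitting the interference at a given node gadget receiver $x$ into contributions from senders within distance $n^2$ and contributions from senders at distance greater than $n^2$, I would argue as follows. The contribution from senders within distance $n^2$ is bounded by Fact~\ref{fact:allnode}, which hands us the worst-case value among the three configurations analyzed (the up, down, and vertical messages of Facts~\ref{fact:u}, \ref{fact:w}, \ref{fact:v}); this worst-case is $0.9994$, and in particular is a constant strictly less than $1$. The contribution from senders at distance greater than $n^2$ is $o(1/n)$ by the earlier far-field fact. Adding the two pieces yields a total interference of at most $0.9994 + o(1/n)$, which is strictly less than $1$ once $n$ is large enough.

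For the remaining small values of $n$ (where $o(1/n)$ has not yet shrunk below the slack $1 - 0.9994$), the reduction can simply solve the 3-coloring instance in constant time, so these cases do not affect the NP-hardness argument. Hence every node gadget receiver $x$ satisfies $S(x) \geq 1 > I(x)$, which is exactly the SINR condition with $\beta_v = 1$ and $n_v = 0$. Therefore every message inside a node gadget is received successfully.

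The main obstacle here is really the tightness of the bound $0.9994$ from Fact~\ref{fact:u}; the argument only works because the slack $0.0006$, although small, is a constant independent of $n$, and the tail contribution from far-away senders can be made arbitrarily small by choosing $n$ large enough. Beyond that, everything is a direct invocation of previously proved facts; no new geometric configurations need to be analyzed because Fact~\ref{fact:allnode} already asserts that the three worst-case configurations dominate every receiver type (up, down, vertical) in the node gadget.
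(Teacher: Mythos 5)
There is a genuine gap: you have dropped the power-$2$ boundary messages. Your reduction of the problem to ``interference $<1$ at every node gadget receiver'' is where things go wrong. You correctly observe that the boundary messages (an edge gadget node transmitting down into the node gadget at power $2$, as at node $u$ in Figure~\ref{fig:nodegadget}) deliver signal $2$, but you then discard that extra factor and try to cover \emph{all} receivers with the uniform bound of Fact~\ref{fact:allnode}. That fact only covers the up, down, and vertical messages of the node gadget itself, i.e.\ the transmissions sent at power $1$ with signal $1$. It does not cover the node that receives the boundary message from the edge gadget, and for that receiver the interference is \emph{not} below $1$: Fact~\ref{fact:boundary} bounds it only by $1.73$. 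That receiver succeeds precisely because its signal is $2 > 1.73$, not because its interference is below $1$. The paper's proof therefore splits into two cases --- power-$1$ messages handled by Fact~\ref{fact:allnode} against signal $1$, and power-$2$ messages handled by Fact~\ref{fact:boundary} against signal $2$ --- and the second case is missing from your argument. As written, your claim ``the total interference at every node gadget receiver is strictly less than $1$'' is false, and your concluding assertion that Fact~\ref{fact:allnode} ``already asserts that the three worst-case configurations dominate every receiver type'' overstates what that fact says.

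The fix is small: keep your signal/interference comparison per receiver rather than per uniform threshold, and invoke Fact~\ref{fact:boundary} for the power-$2$ boundary messages ($1.73 + o(1/n) < 2$) while using Fact~\ref{fact:allnode} only for the power-$1$ messages. Your explicit accounting of the $o(1/n)$ far-field term against the $0.0006$ slack of Fact~\ref{fact:u}, and the remark about handling small $n$ separately, is actually more careful than the paper's own one-line proof on that point and is worth keeping.
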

\begin{proof}
Note that all transmissions involving a node in a node gadget are of length 1, and therefore have a signal of either $1$ or $2$, depending on the power level of the transmission.
The claim now follows from Fact~\ref{fact:allnode} for messages sent at power level 1, and Fact~\ref{fact:boundary} for messages sent at power level 2. Therefore, all nodes within a node gadget successfully receive their messages.
\end{proof}

\begin{fact}
All nodes within an edge gadget successfully receive their messages.
\end{fact}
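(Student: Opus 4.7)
The plan is to mirror the case analysis used to prove Fact~\ref{fact:allnode} for node gadgets, applied to the various receiver geometries that can arise inside an edge gadget. First I would classify the edge-gadget receivers into four types: (i) an interior receiver lying on a straight mini-edge-gadget segment where every nearby transmission is a unit-length hop at power $1$; (ii) a receiver inside a turn corner constructed by gluing two mini-edge-gadgets as in Figure~\ref{fig:edgeturn}; (iii) a receiver in the edge gadget adjacent to where it connects to a node gadget, whose local neighborhood also contains the single power-$2$ boundary transmitter; and (iv) a receiver incident to one of the length-$1.1$ transmissions that we inserted to patch the parity of the edge-gadget length. By the far-field fact that transmitters beyond distance $n^2$ contribute $o(1/n)$ total interference, together with the $\Theta(n^2)$ spacing between distinct edge gadgets and between an edge gadget and any non-incident node gadget, only the locally visible transmitters need to be bounded explicitly in each case.

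For type (i) I would enumerate the simultaneously active transmitters within distance $n^2$ along a straight segment. Because the mini-edge-gadget is a length-$4$ periodic pattern on a grid line and the edge color constraint argument already forces simultaneously active hops to be non-adjacent within a round, the nearest active interferers sit at integer distance at least $2$ on each side of the receiver, and the same ``$\sum_{i\ge 2} i^{-\alpha}$ tail plus integral'' style estimate as in Facts~\ref{fact:v} and~\ref{fact:w} gives an interference bound well below the signal $1$ for $\alpha\ge 3$. For type (ii) I would observe that a turn corner is two mini-edge-gadgets sharing a single corner node, so each receiver in the turn sees at most the union of two already-bounded type-(i) patterns rotated by $90^\circ$, and the same style of bound with a small additional constant slack suffices.

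Type (iii) has two sub-cases. A ``boundary'' receiver that sits inside the node gadget and is fed by a power-$2$ transmitter is exactly the receiver $u$ of Figure~\ref{fig:nodegadget}, whose interference is already bounded by $1.73$ against a signal of $2$ in Fact~\ref{fact:boundary}. Conversely, an edge-gadget receiver sitting just above a node gadget experiences type-(i) interference along its edge gadget plus an additional node-gadget row below plus at most one power-$2$ contributor at distance $\ge 2$; running the type-(i) bound and adding the node-gadget tail of the form $2^{-\alpha}+3^{-\alpha}+4^{-\alpha}+\tfrac{1}{\alpha-1}5^{-(\alpha-1)}+5^{-\alpha}$ plus the extra $2^{-\alpha}$ coming from doubling one power term still leaves the total strictly below $1$ when $\alpha\ge 3$.

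The hard part will be type (iv): a length-$1.1$ transmission delivers only $1.1^{-\alpha}\le 1.1^{-3}\approx 0.751$ of signal, so the interference budget is substantially tighter. I would exploit the freedom noted in the construction in choosing which transmissions along an edge gadget to lengthen from $1$ to $1.1$: place them so that no two length-$1.1$ transmissions are within $n^2$ of each other and so that every such transmission is far from any turn or node gadget. Then the nearest active interferers at a length-$1.1$ receiver sit at distance at least $1.9$, and redoing the type-(i) calculation with the leading $2^{-\alpha}$ term replaced by $1.9^{-\alpha}$ while the rest of the tail is essentially unchanged pushes the total interference safely below $1.1^{-3}$. Assembling the four cases then completes the proof that every edge-gadget receiver is successful.
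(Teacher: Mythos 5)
Your overall strategy would yield a correct proof, but it takes a genuinely different and much more granular route than the paper. The paper dispatches this fact in two strokes: a receiver near a turn is bounded by comparison with the already-analyzed node $v$ of Figure~\ref{fig:nodegadget} (Fact~\ref{fact:v}), and \emph{every} other edge-gadget receiver is handled by a single deliberately crude bound that posits four power-$1$ transmitters at every integer distance $\geq 2$, giving $4\sum_{i\geq 2} i^{-\alpha}\leq 0.723$ for $\alpha=3$, which is then compared against the minimum possible signal $(1.1)^{-3}\geq 0.75$; in particular the length-$1.1$ receivers get no special treatment. Your four-way case split buys precision exactly where the paper's argument is thinnest: the margin between that crude bound and $(1.1)^{-3}$ is only a few hundredths (indeed the exact value of the paper's own worst-case sum is $4(\zeta(3)-1)\approx 0.808 > 0.751$, so a location-specific count of the kind you propose is arguably \emph{needed} there), and your insistence on keeping the length-$1.1$ hops away from turns and node-gadget connections also quietly covers the fact that the corner bound of $0.94$ inherited from Fact~\ref{fact:v} would not beat a length-$1.1$ signal. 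The costs of your route are that the four numerical estimates are currently asserted rather than carried out, and two details need tightening: the claim that simultaneously active hops on a straight segment are non-adjacent does not follow from the edge color constraint lemma (which is the parity argument about $(a,b)$ and $(d,c)$), but from inspecting the explicit round assignment of Figure~\ref{fig:edgegadget} or from Lemma~\ref{crossinglemma}; and requiring the up-to-$30$ length-$1.1$ hops within one edge gadget to be pairwise $n^2$ apart is stronger than the construction guarantees and stronger than you need --- constant spacing, combined with the far-field $o(1/n)$ fact, already ensures each receiver sees at most one perturbed hop nearby.
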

\begin{proof}
Note that a node transmitting near a corner of an edge gadget, as is the transmission to node $x$ shown in Figure~\ref{fig:edgeturn}, can bound its interference by the interference experienced by node $v$, shown in Figure \ref{fig:nodegadget}.  For the remaining edge gadget transmissions, we can simply bound the total interference received by any edge gadget node very conservatively by assuming that each edge gadget node has 4 nodes transmitting at each integer distance from distance 2 to infinity.  We assume there are 4 nodes transmitting at each distance so that we can be sure to upper bound the interference from transmissions coming from both nearby edge gadget nodes, and nearby node gadget nodes.  As such, these interferences can be bounded by $4 \cdot (\sum_{i=2}^{\infty} i^\alpha) \leq 4 \cdot (2^{-\alpha} + 1/(\alpha-1)/(3^{\alpha-1})) \leq .723$, for $\alpha=3$, which is less than the signal of received by any node in an edge gadget namely at least $(1.1)^{-3} \geq 0.75$.
\end{proof}

\end{document}